%
%


\documentclass[letterpaper, 10pt, conference]{ieeeconf}  

\IEEEoverridecommandlockouts                              

\overrideIEEEmargins

\usepackage{amsmath}
\usepackage{amssymb}
\usepackage{amsfonts}
\usepackage{mathptmx} 
\usepackage{times}
\usepackage{graphicx}
\DeclareMathAlphabet{\bit}{OML}{cmm}{b}{it}

\newtheorem{thm}{Theorem}




%



\usepackage{datetime}
\usepackage{times} 
\usepackage{framed} 
\usepackage{graphicx} 
\usepackage{amsmath} 
\usepackage{amssymb}  



\def\fR{\mathfrak{R}}
\def\fH{\mathfrak{H}}

\def\<{\leqslant}           
\def\>{\geqslant}           

\def\d{\partial}
\def\wh{\widehat}

\def\Re{\mathrm{Re}}   

\def\cH{\mathcal{H}}   
\def\mR{\mathbb{R}}    
\def\mC{\mathbb{C}}    

\def\Tr{\mathrm{Tr}}       
\def\rT{\mathrm{T}}        

\def\bE{\mathbf{E}}    


\def\[[[{[\![\![}
\def\]]]{]\!]\!]}


\def\re{\mathrm{e}}        
\def\rd{\mathrm{d}}        



\def\bD{\mathbf{D}}

\def\x{\times}
\def\ox{\otimes}

\def\fF{{\mathfrak F}}

\def\fS{\mathfrak{S}}

\def\bH{\mathbf{H}}

\def\cI{\mathcal{I}}

\def\cE{\mathcal{E}}

\def\eps{\epsilon}

\def\ups{\upsilon}

\def\supp{\mathrm{supp}}    

\title{\LARGE \bf
Risk-sensitive Performance Criteria and Robustness of Quantum Systems with a Relative Entropy Description
of State Uncertainty$^*$}


\usepackage{datetime}

\author{Igor G. Vladimirov$^{\dagger}$, \quad Ian R. Petersen$^{\dagger}$, \quad Matthew R. James$^{\dagger}$
\thanks{$^*$This work is supported by the Air Force Office of Scientific Research (AFOSR) under agreement number FA2386-16-1-4065 and the Australian Research Council under grant DP180101805.}
\thanks{$^\dagger$Research School of Engineering, College of Engineering and Computer Science, Australian National University, ACT 2601, Canberra, Australia.
{\tt igor.g.vladimirov@gmail.com}, {\tt i.r.petersen@gmail.com}, {\tt matthew.james@anu.edu.au}.
}
}

\pagestyle{plain}
\begin{document}
\maketitle
\thispagestyle{empty}

\begin{abstract}
This paper considers links between the original risk-sensitive performance criterion for quantum control systems and its recent quadratic-exponential counterpart.
We discuss a connection between the minimization of these cost functionals and robustness with respect to uncertainty in system-environment quantum states whose deviation from a nominal state is described in terms of the quantum relative entropy. These relations are similar to those in minimax LQG control for classical systems.
The results of the paper can be of use in providing a rational choice of the risk-sensitivity parameter in the context of robust quantum control with entropy theoretic quantification of statistical uncertainty in the system-field state.
\end{abstract}
\begin{keywords}
Quantum systems,
risk-sensitive criteria,
quadratic-exponential functionals,
uncertain quantum states,
quantum relative entropy,
robustness to state uncertainty.

\emph{MSC codes} ---
81S25,   	
81S05,       
81S22,       
81P16,   	
81P40,   	
81Q93,   	
81Q10,   	
60G15,   	
93E20      
\end{keywords}

\section{INTRODUCTION}

Linear quantum stochastic systems, or open quantum harmonic oscillators (OQHOs) \cite{GZ_2004}, are the principal models in linear quantum systems theory \cite{P_2017}. 
Open quantum systems  interact with the environment, which may include other quantum systems, external quantum fields  and classical systems, such as  measurement devices. In the case of OQHOs,  the energetics of this interaction is specified by quadratic Hamiltonians and linear system-field coupling operators with respect to dynamic variables,  which are 
operators on an underlying Hilbert space (for example, the quantum mechanical positions and momenta, or the annihilation and creation operators \cite{LL_1991,M_1998,S_1994}). The dynamics of these system variables are modelled by
Hudson-Parthasarathy  quantum stochastic differential equations (QSDEs) \cite{H_1991,HP_1984,P_1992,P_2015} which are linear in the case of OQHOs. The QSDEs are driven by quantum Wiener processes, which act on a symmetric Fock space \cite{P_1992,PS_1972} and represent external bosonic fields (such as quantized electromagnetic radiation).
Despite certain parallels between the quantum and classical linear SDEs, the noncommutative nature of the quantum variables (and the quantum probabilistic description \cite{H_2001,M_1995} of their statistical properties) leads to nontrivial quantum control and filtering problems for OQHOs (see, for example, \cite{B_1983,B_2010,BH_2006,BVJ_2007,EB_2005,J_2004,J_2005,JNP_2008,NJP_2009,VP_2013a,VP_2013b,WM_2010}).

These problems  are concerned with achieving a certain dynamic behavior for quantum plants of interest by using measurement-based feedback with classical controllers and filters or coherent (measurement-free) feedback   in the form of direct or field-mediated connections \cite{ZJ_2012} with other quantum systems, constituting a quantum feedback network  \cite{GJ_2009,JG_2010}. Quantum control and filtering applications aim to exploit quantum-mechanical resources for artificially engineered systems, for  example, using nonclassical light-matter interaction at atomic scales \cite{WM_2008} for quantum optical computing \cite{NC_2000}.

In addition to qualitative specifications (such as stability), the performance criteria employ optimality considerations in the form of cost functionals to be minimized. This approach is used in quantum linear quadratic Gaussian (LQG) control \cite{EB_2005,MP_2009,NJP_2009},  which, similarly to its classical predecessors \cite{AM_1989,KS_1972}, is concerned with  mean square values of the system variables (whose averaged values are also considered in quantum $\cH_{\infty}$-control settings \cite{JNP_2008}).   Quantum LQG control admits a
guaranteed-cost version \cite{SPJ_2007} with an extension to non-quadratic cost functionals for nonlinear QSDEs \cite{P_2014}. These settings take into account the presence of unmodelled uncertainties in the quantum dynamics  (such as deviations of state-space matrices from their nominal values or sector-bounded nonlinearities \cite{PUJ_2012}) and lead to upper bounds (and their minimization) for the worst-case values of the costs over the class of uncertain system dynamics.
The cost functionals in these approaches are formulated in terms of the second or higher-order moments of the system variables at one instant in time (or the integrals of the one-point moments over bounded time intervals in finite-horizon settings \cite{VP_2011b}).

The quantum risk-sensitive performance criterion \cite{J_2004,J_2005} (see also \cite{DDJW_2006,YB_2009}), which was used previously  for measurement-based quantum control and filtering problems,
 differs qualitatively from the cost functionals mentioned above.  Being a weighted mean square value of the time-ordered exponential for an operator differential equation (ODE), it involves higher-order moments of  the system variables at different instants. Since multi-point quantum states of noncommuting system variables do not reduce to classical joint probability distributions (even in the Gaussian case \cite{CH_1971,KRP_2010}),
 the quantum risk-sensitive cost is also different from its classical predecessors \cite{BV_1985,J_1973,W_1981}. The general structure of the classical risk-sensitive performance criteria has recently been used in a quadratic-exponential functional (QEF) \cite{VPJ_2017b}. This is an alternative version of the original quantum risk-sensitive cost and is organized as the exponential moment of the integral of a quadratic form of the system variables over a bounded time interval.  Both functionals impose an exponential penalty on the system variables, which is controlled by a risk-sensitivity parameter, and their computation and minimization (especially in the coherent quantum control setting mentioned above) are challenging problems. Their practical significance lies in potentially more conservative quantum dynamics secured by risk-sensitive controllers and filters. In fact, the minimization of the QEF cost leads to Cramer type upper bounds \cite{VPJ_2017b}  for the exponential decay in the tail distribution for the corresponding quadratic functional of the quantum system variables in the spirit of the large deviations theory \cite{DE_1997,S_1996}.

In the present paper, we obtain an integro-differential equation for the original quantum risk-sensitive cost functional and compare it with the evolution of the QEF studied in \cite{VPJ_2017b}.  This comparison shows that  the ODE for the original functional can be ``adjusted'' 
so as to reproduce the QEF, and thus the latter belongs to the wider class of quantum risk-sensitive costs of \cite{J_2004,J_2005}. As shown in \cite{VPJ_2017b}, the asymptotic growth rate of the QEF lends itself to successive computation (in terms of cumulants for multi-point Gaussian quantum states) for stable OQHOs driven by vacuum fields \cite{P_1992}. Although the QEF is associated with a specific system-field state, it gives rise to guaranteed upper bounds for the worst-case value of the corresponding quadratic cost when the actual state may differ from its nominal model. The ``size'' of this quantum statistical uncertainty is measured by the quantum relative entropy of the actual system-field state with respect to the nominal one.  The quantum robust performance estimates are based substantially on the results of \cite{J_2004,YB_2009} (and also references therein) and correspond to the connections between risk-sensitive control and minimax LQG control for classical systems with a relative entropy description of statistical uncertainty in the driving noise \cite{DJP_2000,P_2006,PJD_2000,PUS_2000}.

The paper is organised as follows.
Section~\ref{sec:funs} specifies the original quantum risk-sensitive cost functional and its quadratic-exponential counterpart.
Section~\ref{sec:links} compares the evolution equations for these functionals in the framework of OQHO dynamics.
Section~\ref{sec:worst} relates the QEF to the worst-case value of the quadratic cost over the class of uncertain system-field states with a given quantum relative entropy threshold.
Section~\ref{sec:upper} specifies the robust performance estimates for stable OQHOs.
Section~\ref{sec:conc} provides concluding remarks.

\section{QUANTUM RISK-SENSITIVE COST FUNCTIONALS}
\label{sec:funs}

We consider an open quantum system with an even number of dynamic variables $X_1, \ldots, X_n$ (for example, pairs of conjugate  quantum mechanical positions and momenta \cite{S_1994}). These system variables  are time-varying self-adjoint operators on a complex separable Hilbert space $\fH$ (or a dense domain thereof). For what follows, they are assembled into the vector
\begin{equation}
\label{X}
    X:=
    \begin{bmatrix}
        X_1\\
        \vdots\\
        X_n
    \end{bmatrix}
\end{equation}
(vectors are organised as columns unless specified otherwise, and the time argument is often omitted for brevity).  The statistical properties of the system variables depend on a density operator (quantum state) $\rho$ which is a positive semi-definite self-adjoint operator on $\fH$ with unit trace $\Tr \rho = 1$. The quantum state gives rise to the quantum expectation
\begin{equation}
\label{bE}
    \bE \xi := \Tr(\rho \xi)
\end{equation}
of a quantum variable $\xi$ on the underlying space $\fH$. In accordance with the fact that a self-adjoint  quantum variable $\xi = \xi^\dagger$ represents a real-valued physical quantity, its averaging leads to a real mean value: $\bE \xi \in \mR$. Furthermore, if $\xi$ is positive semi-definite, then $\bE \xi = \Tr (\eta \eta^{\dagger}) \> 0$, where $\eta := \sqrt{\rho}\sqrt{\xi}$ is an auxiliary quantum variable (not necessarily self-adjoint)  which uses positive semi-definite square roots of such operators.

Moderate mean square values $\bE (X_k^2)$ of the system variables in (\ref{X}) may be required for well-posedness  of the system (for example, from energy considerations or in order to keep the system in the range where its linearised model is satisfactory).  Their large values can be penalised by cost functionals similar to those in classical LQG and risk-sensitive control \cite{AM_1989,BV_1985,J_1973,W_1981}. However, the quantum setting is complicated by the fact that the system variables do not   commute with each other, and this is taken into account in the quantum counterparts of the conventional performance criteria.

The quantum risk-sensitive cost functional, proposed in \cite{J_2004,J_2005}, employs an auxiliary quantum process which is defined as the solution of the operator differential equation
\begin{equation}
\label{Rdot}
    \dot{R}_{\theta}(t) = \frac{\theta }{2}C(t) R_{\theta}(t),
    \qquad
    t\>0.
\end{equation}
Here, $\dot{(\, )}:= \d_t(\cdot)$ denotes the time derivative, $\theta\>0$ is a parameter, whose role is clarified below, and
$C(t)$ is a time-dependent positive semi-definite self-adjoint quantum variable which can be a function of the current system variables (or their past history over the time interval $[0,t]$ in a more general case). Assuming that the initial condition for the equation (\ref{Rdot}) is the identity operator (that is, $R_{\theta}(0) = \cI_{\fH}$), its fundamental solution is given by
the (leftward)  time ordered exponential
\begin{equation}
\label{Rt}
    R_{\theta}(t)
    =
    \mathop{\overleftarrow{\exp}}
    \Big(
    \frac{\theta}{2}
    \int_0^t
    C(s)
    \rd s
    \Big).
\end{equation}
Although $R_{\theta}(t)$ involves exponentiation of the system operators and its averaging imposes a penalty on their exponential moments, $R_{\theta}(t)$ is, in general,  a non-Hermitian operator with a complex mean value. Since real-valued costs are convenient for comparison (the set of reals is linearly ordered), a weighted mean square value of the quantum process $R_{\theta}$ is considered:
\begin{equation}
\label{Et}
    E_{\theta}(t):= \bE
    \big(
        R_{\theta}(t)^{\dagger} \re^{\theta T(t)}R_{\theta}(t)
    \big),
\end{equation}
cf. \cite[Eqs. (19)--(21)]{J_2005}. Here,
$T(t)$ is a positive semi-definite self-adjoint operator whose role is similar to that of the terminal cost (on the time interval $[0,t]$) in classical control problems.
For simplicity, this additional cost will not be included in what follows, so that  $T(t) = 0$, and (\ref{Et}) takes the form
\begin{equation}
\label{Et0}
    E_{\theta}(t)= \bE
    \big(
        R_{\theta}(t)^{\dagger}
        R_{\theta}(t)
    \big).
\end{equation}
If
the quantum variables $C(s)$ commuted  with each other for all $0\< s\< t$,  then (\ref{Et}) would reduce to
\begin{equation}
\label{Eclass}
    E_{\theta}(t)
    =
    \bE
    \re^{
        \theta
            \int_0^t C(s)\rd s
    },
\end{equation}
which is organised as the classical exponential-of-integral performance criteria \cite{BV_1985,J_1973,W_1981}, with $\theta$ being the risk-sensitivity parameter. The latter should be small enough in order for the exponential moments to be finite (otherwise, $E_\theta(t)=+\infty$).  Moreover, in the noncommutative quantum setting, the right-hand side of (\ref{Eclass}) provides an alternative to the original quantum risk-sensitive cost functional defined by (\ref{Rt}) and (\ref{Et0}). Its quadratic-exponential counterpart, considered recently in \cite{VPJ_2017b}, is given by
\begin{equation}
\label{QEF}
    \Xi_{\theta}(t)
    :=
    \bE \re^{\theta\varphi(t)},
\end{equation}
where $\varphi$ is a quantum process defined for any time $t\> 0$ as the integral of a  quadratic function $\psi$  of the system variables in (\ref{X}) over the time interval $[0,t]$:
\begin{align}
\label{phi}
    \varphi(t)
    & :=
    \int_0^t
    \psi(s)
    \rd s,\\
\label{psi}
    \psi(s)
    & :=
    X(s)^{\rT} \Pi X(s).
\end{align}
Here, the transpose $(\cdot)^\rT$  applies to vectors and matrices of operators as if their entries were scalars, and $\Pi$ is a real positive semi-definite symmetric matrix of order $n$ (the dependence of $\Xi_{\theta}(t)$ on $\Pi$ is omitted for brevity). Accordingly, $\varphi(t)$ and $\psi(t)$ are positive semi-definite self-adjoint operators on the underlying Hilbert space $\fH$, which follows from the representation
\begin{equation}
\label{psizeta}
    \psi
    =
    \zeta^{\rT}\zeta
    =
    \sum_{k=1}^n
    \zeta_k^2
\end{equation}
in terms of the auxiliary self-adjoint quantum variables constituting the vector
\begin{equation}
\label{zeta}
    \zeta
    :=
    \begin{bmatrix}
        \zeta_1\\
        \vdots\\
        \zeta_n
    \end{bmatrix}
    :=
    \sqrt{\Pi} X.
\end{equation}
The risk-sensitive cost $E_{\theta}$, given by (\ref{Rt}) and (\ref{Et0}), and its quadratic-exponential counterpart $\Xi_{\theta}$ in (\ref{QEF})--(\ref{psi}) have a similar asymptotic behaviour for small values of $\theta$ in the sense that
\begin{align}
\label{Easy}
    \lim_{\theta\to 0+ }
    \Big(
        \frac{1}{\theta}
        \ln E_{\theta}(t)
    \Big)
    & =
    \int_0^t \bE C(s)\rd s,\\
\label{Xiasy}
    \lim_{\theta\to 0+ }
    \Big(
        \frac{1}{\theta}
        \ln \Xi_{\theta}(t)
    \Big)
    & =
    \int_0^t \bE \psi(s)\rd s
\end{align}
under suitable integrability conditions. Moreover, they
are identical for any $\theta\> 0$ in the classical case if $C=\psi$. However,  these performance indices are different in the noncommutative quantum setting because of the discrepancy between the time-ordered exponential and the usual operator exponential. In particular, the QEF $\Xi_{\theta}(t)$ in (\ref{QEF}) is the moment-generating function (and $\ln \Xi_{\theta}(t)$ in (\ref{Xiasy}) is the cumulant-generating function) for the classical probability distribution (the averaged spectral measure \cite{H_2001})  of the self-adjoint quantum variable $\varphi(t)$ at a given instant $t\>0$. In contrast to $\Xi_{\theta}(t)$, the quantity $E_\theta(t)$ in (\ref{Et0}) (whose logarithm is present in (\ref{Easy})),  does not lend itself (as a function of $\theta$) to a similar  association with the probability distribution of a single quantum variable.

\section{RELATIONS BETWEEN THE FUNCTIONALS}
\label{sec:links}

We will now consider the time evolution of the original risk-sensitive cost functional. To this end, we associate the following modified density operator
\begin{equation}
\label{rt}
    r_{\theta, t}
    :=
    \frac{1}{E_{\theta}(t)}
    R_{\theta}(t)\rho R_{\theta}(t)^{\dagger}
\end{equation}
with the quantum state $\rho$ in (\ref{bE}).
The property that $r_{\theta, t}$ is a density operator is ensured by its self-adjointness, positive semi-definiteness and the unit trace property
\begin{align*}
    \Tr r_{\theta, t}
     & =
    \frac{1}{E_{\theta}(t)}
    \Tr \big(R_{\theta}(t)\rho R_{\theta}(t)^{\dagger}\big)\\
     & =
    \frac{1}{E_{\theta}(t)}
    \Tr \big(\rho R_{\theta}(t)^{\dagger}R_{\theta}(t)\big)\\
    & =
    \frac{1}{E_{\theta}(t)}
    \bE \big(R_{\theta}(t)^{\dagger}R_{\theta}(t)\big)=1
\end{align*}
in view of (\ref{Et0}).
The quantum expectation over the modified density operator $r_{\theta, t}$ in (\ref{rt}) takes the form
\begin{align}
\nonumber
  \cE_{\theta, t}\xi
  & :=
  \Tr(r_{\theta,t}\xi)\\
\nonumber
  & =
    \frac{1}{E_{\theta}(t)}
    \Tr
    \big(
        R_{\theta}(t) \rho R_{\theta}(t)^{\dagger}
        \xi
    \big)\\
\label{cEt}
    & =
    \frac{1}{E_{\theta}(t)}
    \bE
    \big(
    R_{\theta}(t)^{\dagger}
    \xi
    R_{\theta}(t)
    \big).
\end{align}
If $\theta=0$, then (\ref{Rdot}) implies that the exponential in (\ref{Rt}) reduces to  $R_0(t) = \cI_{\fH}$,  and hence, $E_0(t) = 1$ in view of (\ref{Et0}). In this limiting case, the definitions (\ref{rt}) and (\ref{cEt}) yield the original quantum state $r_{0,t}=\rho$ and the original expectation $\cE_{0,t}=\bE$ in (\ref{bE}).

\begin{thm}
\label{th:Edot}
The quantum risk-sensitive cost functional $E_{\theta}(t)$, given by (\ref{Rt}) and  (\ref{Et0}), satisfies the integro-differential equation
\begin{equation}
\label{Edot}
    (\ln E_{\theta}(t))^{^\centerdot}
  =
  \theta
  \cE_{\theta,t}
    C(t),
\end{equation}
where $\cE_{\theta,t}$ is the modified quantum expectation given by (\ref{cEt}).
\hfill$\square$
\end{thm}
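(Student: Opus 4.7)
The plan is to differentiate $E_\theta(t)$ directly from its definition (\ref{Et0}) and then normalise by $E_\theta(t)$ itself so as to recover the modified quantum expectation $\cE_{\theta,t}$ in (\ref{cEt}). The key observation is that although $R_\theta(t)$ is a (non-Hermitian) time-ordered exponential, the product $R_\theta(t)^{\dagger}R_\theta(t)$ can be differentiated with the ordinary Leibniz rule applied to operator-valued functions of time, because (\ref{Rdot}) is a standard operator ODE.

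First, I would take the adjoint of (\ref{Rdot}). Using the self-adjointness of $C(t)$, this yields
\begin{equation*}
    \dot{R}_{\theta}(t)^{\dagger}
    =
    \frac{\theta}{2}\, R_{\theta}(t)^{\dagger} C(t).
\end{equation*}
Combining this with (\ref{Rdot}) via the product rule gives
\begin{equation*}
    \big(R_{\theta}(t)^{\dagger} R_{\theta}(t)\big)^{^\centerdot}
    =
    \frac{\theta}{2} R_{\theta}(t)^{\dagger} C(t) R_{\theta}(t)
    +
    \frac{\theta}{2} R_{\theta}(t)^{\dagger} C(t) R_{\theta}(t)
    =
    \theta\, R_{\theta}(t)^{\dagger} C(t) R_{\theta}(t).
\end{equation*}
Note that the two contributions coincide without having to commute $C(t)$ past $R_\theta(t)$, which is crucial since these operators generally do not commute.

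Next, I would take the expectation $\bE$ and interchange it with the time derivative (under the implicit regularity assumptions used throughout the paper), obtaining $\dot{E}_{\theta}(t) = \theta\, \bE\big(R_{\theta}(t)^{\dagger} C(t) R_{\theta}(t)\big)$. Dividing both sides by $E_{\theta}(t)$ and recognising the right-hand side of (\ref{cEt}) with $\xi = C(t)$ yields
\begin{equation*}
    (\ln E_{\theta}(t))^{^\centerdot}
    =
    \frac{\dot{E}_{\theta}(t)}{E_{\theta}(t)}
    =
    \frac{\theta}{E_{\theta}(t)}
    \bE\big(R_{\theta}(t)^{\dagger} C(t) R_{\theta}(t)\big)
    =
    \theta\, \cE_{\theta,t} C(t),
\end{equation*}
which is (\ref{Edot}).

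The calculation itself is short; the only real subtlety — and hence the main obstacle in a fully rigorous treatment — is justifying the product-rule differentiation of $R_\theta(t)^\dagger R_\theta(t)$ together with the interchange of $\d_t$ and the trace $\bE(\cdot)=\Tr(\rho\, \cdot\,)$ when $C(t)$, $R_\theta(t)$ and $\rho$ are in general unbounded operators on $\fH$. This requires suitable domain and integrability hypotheses on $C(t)$ and on the state $\rho$, which the statement implicitly assumes.
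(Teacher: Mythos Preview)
Your proof is correct and follows essentially the same route as the paper: differentiate $R_\theta^\dagger R_\theta$ via the Leibniz rule using (\ref{Rdot}) and self-adjointness of $C(t)$, take expectations, and divide by $E_\theta(t)$ to recognise the modified expectation $\cE_{\theta,t}$. Your additional remark about the domain and integrability hypotheses needed to justify the product rule and the interchange of $\d_t$ with $\bE$ is a fair caveat that the paper leaves implicit.
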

\begin{proof}
By using (\ref{Rdot}) and self-adjointness of the operator  $C(t)$, it follows that
\begin{align}
\nonumber
        (R^{\dagger}R)^{^\centerdot}
        & =
        \dot{R}^\dagger R + R^\dagger \dot{R}\\
\nonumber
        & =
        \frac{\theta}{2}
        \big(
        (C R)^\dagger R + R^\dagger CR
        \big)\\
\label{RRdot}
    & =
      \theta R^\dagger C R,
\end{align}
where the subscript $\theta$ is also omitted for brevity.
Now, the time differentiation of (\ref{Et0}) and the averaging of (\ref{RRdot}) lead to
\begin{align}
\nonumber
    \dot{E}_{\theta}(t)
    & =
    \bE (R_{\theta}(t)^{\dagger}R_{\theta}(t))^{^\centerdot}\\
\label{Edot1}
    & =
    \theta \bE \big(R_{\theta}(t)^\dagger C(t) R_{\theta}(t)\big).
\end{align}
Division of both sides of (\ref{Edot1}) by $E_{\theta}(t)$ allows the logarithmic  time derivative of the cost functional to be represented in the form
\begin{align*}
    (\ln E_{\theta}(t))^{^\centerdot}
    & =
    \frac{\theta}{E_{\theta,t}}
     \bE \big(R_{\theta}(t)^\dagger C(t) R_{\theta}(t)\big)\\
     & =
     \theta \cE_{\theta, t} C(t),
\end{align*}
whose right-hand side is related to the modified expectation (\ref{cEt}), thus establishing (\ref{Edot}).
\end{proof}

Note that Theorem~\ref{th:Edot}, which is a direct corollary of the ODE (\ref{Rdot}), does not use specific assumptions on the dynamics and the commutation structure of the system variables themselves. However, the time evolution of the QEF \cite{VPJ_2017b} depends on a more detailed description of the quantum dynamics. To this end, we assume that the system variables satisfy the Weyl CCRs whose infinitesimal Heisenberg form is
\begin{align}
\nonumber
    [X, X^{\rT}]
      & :=    ([X_j,X_k])_{1\< j,k\< n}\\
\nonumber
     & =    XX^{\rT}- (XX^{\rT})^{\rT}\\
\label{Theta}
     & =
     2i \Theta \ox \cI_{\fH}
\end{align}
(on a dense domain in $\fH$). Here, $i:= \sqrt{-1}$ is the imaginary unit, $[\alpha,\beta]:= \alpha \beta-\beta\alpha$ is the commutator of linear operators, $\Theta:= (\theta_{jk})_{1\< j,k\< n}$ is a nonsingular real antisymmetric matrix of order $n$, and $\ox$ is the tensor product. In what follows,  the matrix $\Theta \ox \cI_{\fH} = (\theta_{jk}\cI_{\fH})_{1\< j,k\< n}$ in (\ref{Theta}) is identified with $\Theta$. Furthermore, the system variables are assumed to be governed by a linear quantum stochastic differential equation (QSDE)
\begin{equation}
\label{dX}
    \rd X
    =
    A X \rd t+ B \rd W,
\end{equation}
where $A\in \mR^{n\x n}$ and $B \in \mR^{n\x m}$ are constant matrices whose structure is clarified below.
This QSDE is driven by the vector
$$
  W:=
  \begin{bmatrix}
        W_1\\
        \vdots\\
        W_m
  \end{bmatrix}
$$
of an even number $m$ of quantum Wiener processes $W_1, \ldots, W_m$ which are time-varying self-adjoint operators on a symmetric Fock space $\fF$ \cite{P_1992,PS_1972}. These operators  represent the external bosonic fields and have a complex positive semi-definite Hermitian Ito matrix $\Omega \in \mC^{m\x m}$:
\begin{equation}
\label{WW}
    \rd W \rd W^{\rT}
    =
    \Omega \rd t,
    \qquad
    \Omega := I_m + iJ,
\end{equation}
where $I_m$ is the identity matrix of order $m$. The imaginary part
\begin{equation}
\label{J}
        J
        :=
       \begin{bmatrix}
           0 & I_{m/2}\\
           -I_{m/2} & 0
       \end{bmatrix}
\end{equation}
of $\Omega$ is an orthogonal real antisymmetric matrix of order $m$ (so that $J^2=-I_m$), which specifies CCRs for the quantum Wiener processes as $[\rd W, \rd W^{\rT}] = 2iJ\rd t$. The state-space matrices $A$ and $B$ in (\ref{dX}) are not arbitrary and satisfy the algebraic equation
\begin{equation}
\label{PR}
  A \Theta + \Theta A^{\rT} + BJB^{\rT} = 0,
\end{equation}
which is part of the physical realizability conditions \cite{JNP_2008,SP_2012} and is closely related to the preservation of the CCRs (\ref{Theta}) in time. The matrix pairs $(A,B)$ satisfying (\ref{PR}) are parameterized as
\begin{equation}
\label{AB}
  A = 2\Theta (K + M^\rT J M),
  \quad
  B = 2\Theta M^{\rT}
\end{equation}
in terms of matrices $K = K^{\rT} \in \mR^{n\x n}$ and $M \in \mR^{m\x n}$, which specify the system Hamiltonian $\frac{1}{2} X^\rT K X$ and the vector $MX$ of system-field coupling operators.
The relations (\ref{Theta})--(\ref{AB}) describe an open quantum harmonic oscillator (OQHO) whose  internal dynamics is affected by the interaction with the external bosonic fields. Accordingly,  the tensor-product system-field Hilbert space is given by
\begin{equation}
\label{fH}
    \fH := \fH_0 \ox \fF,
\end{equation}
where $\fH_0$ is a Hilbert space for the action of the initial system variables $X_1(0), \ldots, X_n(0)$. Also, the system-field density operator $\rho$ in (\ref{bE}) is also
assumed to have a tensor-product structure:
\begin{equation}
\label{rho}
  \rho
  :=
  \rho_0 \ox \ups,
\end{equation}
where $\rho_0$ is the initial system state on $\fH_0$ in (\ref{fH}), and the fields are in the vacuum state $\ups$ \cite{HP_1984,P_1992}.
The CCRs (\ref{Theta}), which are concerned with one point in time, extend to different moments of time as
\begin{equation}
\label{XsXtcomm}
    [X(s), X(t)^{\rT}]
    =
    2i\Lambda(s-t),
    \qquad
    s,t\> 0,
\end{equation}
where
\begin{equation}
\label{Lambda}
    \Lambda(\tau)
    =
    \left\{
    \begin{matrix}
    \re^{\tau A}\Theta & {\rm if}\ & \tau\> 0\\
    \Theta \re^{-\tau A^{\rT}} & {\rm if}\ & \tau< 0\\
    \end{matrix}
    \right.
\end{equation}
is the two-point CCR matrix of the system variables, with $\Lambda(0) = \Theta$.
Now, the formulation of the following theorem also uses a time-varying change of the density operator in (\ref{bE}):
\begin{equation}
\label{rhot}
    \rho_{\theta, t}
    :=
    \frac{1}{\Xi_{\theta}(t)}
    S_\theta(t)
    \rho
    S_\theta(t),
\end{equation}
where
\begin{equation}
\label{S}
  S_\theta(t)
  :=
  \re^{\frac{\theta}{2}\varphi(t)}
\end{equation}
is the positive definite self-adjoint square root of the operator $\re^{\theta \varphi(t)}$.
This quantum state transformation resembles (\ref{rt}), except that the left and right factors in (\ref{rhot}) are identical self-adjoint operators, and the normalization employs the QEF from (\ref{QEF}) instead of (\ref{Et0}).
The quantum expectation over the density operator $\rho_{\theta, t}$ in (\ref{rhot}) is computed as
\begin{equation}
\label{bEt}
  \bE_{\theta, t}\xi
  :=
    \frac{1}{\Xi_{\theta}(t)}
    \bE
    \big(
    S_\theta(t)
    \xi
    S_\theta(t)
    \big).
\end{equation}
Similarly to (\ref{rt}) and (\ref{cEt}), the  definitions (\ref{rhot}) and (\ref{bEt}) reproduce the original density operator $\rho_{0,t}=\rho$ and the original expectation $\bE_{0,t}=\bE$ in the case
$\theta=0$.  We will now provide a relevant part of \cite[Theorem~1]{VPJ_2017b}.

\begin{thm}
\label{th:ODE}
The QEF $\Xi_{\theta}(t)$ in (\ref{QEF}), associated with the OQHO in (\ref{Theta})--(\ref{AB}), satisfies the integro-differential equation
\begin{equation}
\label{dotQEF}
    (\ln \Xi_{\theta}(t))^{^\centerdot}
    =
    \theta
    \bE_{\theta,t}
    \Psi_{\theta}(t),
\end{equation}
where $\bE_{\theta,t}$ is the modified quantum expectation given by  (\ref{bEt}). Here,
\begin{align}
\nonumber
  \Psi_{\theta}(t)
  := &
  \psi(t)\\
\nonumber
   & +
    \frac{\theta}{2}
    \left(
        \Re
        \Big(
            X(t)^{\rT}
            \int_0^t
            \alpha_{\theta,t}(\sigma)
            X(\sigma)
            \rd \sigma
        \Big)
    \right.\\
\label{Psi}
        &+
    \left.
        \int_{[0,t]^2}
        X(\sigma)^{\rT}
        \beta_{\theta,t}(\sigma,\tau)
        X(\tau)
        \rd \sigma
        \rd \tau
    \right)
\end{align}
is a time-varying self-adjoint operator which is a quadratic function of the past history of the system variables, with $\psi$ given by  (\ref{psi}),
and the real part $\Re(\cdot)$ extended to operators as $\Re \xi:= \frac{1}{2}(\xi+\xi^{\dagger})$.
Also, the functions $\alpha_{\theta, t}: [0,t]\to \mR^{n\x n}$ and $\beta_{\theta,t}: [0,t]^2\to \mR^{n\x n}$ are related to the two-point CCR matrix $\Lambda$ of the system variables in (\ref{XsXtcomm}) and (\ref{Lambda}) as described in \cite[Theorem~1 and Lemma~2]{VPJ_2017b}.
\hfill$\square$
\end{thm}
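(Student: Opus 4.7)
The plan is to differentiate $\Xi_\theta(t) = \bE \re^{\theta\varphi(t)}$ in time and then symmetrise the result so that the modified density operator $\rho_{\theta,t}$ in (\ref{rhot}) emerges, paralleling the manipulations used in the proof of Theorem~\ref{th:Edot}. Since $\dot\varphi(t) = \psi(t)$ by (\ref{phi}), Duhamel's formula for the derivative of the exponential of a time-varying self-adjoint operator gives
\begin{equation*}
    \dot\Xi_\theta(t)
    = \theta \int_0^1 \bE\big(\re^{u\theta\varphi(t)}\psi(t)\re^{(1-u)\theta\varphi(t)}\big)\rd u.
\end{equation*}
Substituting $u = 1/2 + v$ with $v\in[-1/2,1/2]$, factoring $\re^{(1/2\pm v)\theta\varphi(t)} = \re^{\theta\varphi(t)/2}\re^{\pm v\theta\varphi(t)}$, and using cyclicity of the trace together with (\ref{rhot})--(\ref{bEt}) rewrites each integrand as $\Xi_\theta(t)\bE_{\theta,t}(\re^{v\theta\varphi(t)}\psi(t)\re^{-v\theta\varphi(t)})$. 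Dividing by $\Xi_\theta(t)$ and interchanging expectation with the $v$-integral yields $(\ln\Xi_\theta(t))^{^\centerdot} = \theta\bE_{\theta,t}\Psi_\theta(t)$ with the compact representation
\begin{equation*}
    \Psi_\theta(t) = \int_{-1/2}^{1/2} \re^{v\theta\varphi(t)}\psi(t)\re^{-v\theta\varphi(t)}\,\rd v.
\end{equation*}

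The next task is to reconcile this symmetric-average adjoint-action expression with the explicit bilinear-in-$X$ form (\ref{Psi}) featuring the kernels $\alpha_{\theta,t}$ and $\beta_{\theta,t}$. I would expand the adjoint action as $\re^{v\theta\varphi(t)}\psi(t)\re^{-v\theta\varphi(t)} = \sum_{k\geq 0}\frac{(v\theta)^k}{k!}\mathrm{ad}_{\varphi(t)}^k\psi(t)$, where $\mathrm{ad}_\xi\eta := [\xi,\eta]$; integration over $[-1/2, 1/2]$ then annihilates the odd-$k$ contributions and leaves $\Psi_\theta(t) = \psi(t) + \sum_{k\geq 1}\frac{\theta^{2k}}{(2k+1)!\,4^k}\,\mathrm{ad}_{\varphi(t)}^{2k}\psi(t)$. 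Since $\psi(s) = X(s)^\rT\Pi X(s)$ is quadratic and $[X(\sigma), X(\tau)^\rT] = 2i\Lambda(\sigma-\tau)$ from (\ref{XsXtcomm})--(\ref{Lambda}) is a $c$-number matrix, the elementary commutator $[\psi(\sigma),\psi(\tau)]$ is bilinear in $X(\sigma),X(\tau)$ with coefficients built from $\Pi$ and $\Lambda(\sigma-\tau)$. Because the space of bilinear forms in $X(\cdot)$ at various times is closed under these brackets, each iterated commutator $\mathrm{ad}_{\varphi(t)}^{2k}\psi(t)$ reduces to a $2k$-fold time-integral over $[0,t]^{2k}$ of a nested bracket which remains bilinear in the system variables at two among the $2k+1$ involved times.

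Separating contributions by whether the surviving bilinear factor contains $X(t)$ on one side (which feeds the single-integral $\alpha_{\theta,t}$ term, with $\Re(\cdot)$ enforcing self-adjointness of $\Psi_\theta(t)$) or involves only $X(\sigma),X(\tau)$ with $\sigma,\tau\in[0,t]$ (which feeds the double-integral $\beta_{\theta,t}$ term) reorganises the expansion into the closed form (\ref{Psi}). The $\theta$-dependence of the kernels $\alpha_{\theta,t}$ and $\beta_{\theta,t}$ then encodes the geometric series in $\theta^{2k}$ coming from the higher-order iterated commutators.

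The main obstacle is precisely this resummation: at each order $\theta^{2k}$ one obtains iterated time-ordered convolutions of $\Pi$ with the matrix-exponential-based kernel $\Lambda$ from (\ref{Lambda}), and isolating closed-form expressions for $\alpha_{\theta,t}$ and $\beta_{\theta,t}$ requires exploiting the Lie-algebraic closure of bilinear forms in $X(\cdot)$ together with the explicit block structure of $\Lambda(\sigma-\tau)$ coming from the linear QSDE (\ref{dX}). This combinatorial bookkeeping is carried out in \cite[Theorem~1 and Lemma~2]{VPJ_2017b}, whose output recovers the announced form (\ref{Psi}) and thereby establishes (\ref{dotQEF}).
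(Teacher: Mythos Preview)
Your proposal is correct and follows essentially the same route as the paper: your Duhamel-plus-symmetrisation step yields exactly the key identity $(\re^{\theta\varphi})^{^\centerdot}=\theta S_\theta\Psi_\theta S_\theta$ that the paper singles out (equation~(\ref{ephidot})), and the subsequent reduction of $\Psi_\theta$ to the bilinear form (\ref{Psi}) via the Lie-algebra closure of quadratics in CCR variables is precisely what the paper invokes, with both you and the paper deferring the explicit kernel resummation to \cite[Theorem~1 and Lemma~2]{VPJ_2017b}. Your integral representation $\Psi_\theta=\int_{-1/2}^{1/2}\re^{v\theta\varphi}\psi\,\re^{-v\theta\varphi}\rd v$ is a convenient compact restatement of (\ref{ephidot}) that makes the self-adjointness and the even-in-$\theta$ structure of $\Psi_\theta-\psi$ transparent.
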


The proof of Theorem~\ref{th:ODE}, given in \cite{VPJ_2017b}, is based substantially on the relation
\begin{equation}
\label{ephidot}
    \big(
        \re^{\theta \varphi(t)}
    \big)^{^\centerdot}
    =
    \theta
    S_\theta(t)
    \Psi_\theta(t)
    S_\theta(t),
\end{equation}
which is obtained by using (\ref{S}) and the fact that quadratic functions of quantum variables, satisfying CCRs,  form a Lie algebra with respect to the commutator (see, for example, \cite[Appendix A]{VPJ_2017b} and references therein). This plays its role in combination with the property that both $\varphi(t)$ in (\ref{phi}) and its time derivative  $\psi(t) = \dot{\varphi}(t)$ in (\ref{psi}) are quadratic functions of the system variables (on the time interval $[0,t]$) which satisfy the CCRs (\ref{XsXtcomm}).

Since, in general, $[\varphi(t),\psi(t)]\ne 0$ (that is,  $\varphi(t)$ and $\psi(t)$ do not commute), the self-adjoint quantum  variable $\Psi_{\theta}(t)$ in (\ref{dotQEF}) and (\ref{Psi}) is not necessarily positive semi-definite in contrast to $C(t)$ in (\ref{Edot}).
Despite these discrepancies, the quantum risk-sensitive and quadratic-exponential functionals can be reconciled  by an appropriate choice of the process $C$ which drives (\ref{Rdot}) and (\ref{Rt}) (with the condition $C(t)\succcurlyeq 0$ being omitted).

\begin{thm}
\label{th:ort}
Suppose the quantum process $C$ in (\ref{Rdot}) is related to $\Psi_{\theta}$ in (\ref{Psi}) by
\begin{equation}
\label{CV}
    C(t) = V_\theta(t)\Psi_\theta(t) V_\theta(t)^\dagger,
\end{equation}
for all $t\>0$, where
\begin{equation}
\label{V}
    V_\theta(t)
    :=
    R_\theta(t)S_{\theta}(t)^{-1}
\end{equation}
is a time-varying operator defined in terms of (\ref{Rt})  and (\ref{S}). Then the original quantum risk-sensitive cost functional $E_{\theta}(t)$ in (\ref{Et0}) reproduces its quadratic-exponential counterpart $\Xi_{\theta}(t)$ in (\ref{QEF}), with the operator $V_\theta(t)$ remaining unitary for all $t\>0$.
\hfill$\square$
\end{thm}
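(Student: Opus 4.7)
The overall plan is as follows. The target equality $E_\theta(t) = \Xi_\theta(t)$ reduces, upon comparing (\ref{Et0}) with (\ref{QEF}) under the expectation $\bE$, to the operator identity $R_\theta(t)^\dagger R_\theta(t) = \re^{\theta\varphi(t)}$. Since $S_\theta(t) = \re^{\frac{\theta}{2}\varphi(t)}$ is self-adjoint, this in turn is equivalent, in view of (\ref{V}), to $V_\theta(t)^\dagger V_\theta(t) = \cI_{\fH}$. Accordingly, the whole proof amounts to establishing the isometry---and, through invertibility of $R_\theta$ and $S_\theta$, the full unitarity---of $V_\theta(t)$ for all $t\>0$.

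First I would differentiate the factorization $R_\theta = V_\theta S_\theta$ and substitute (\ref{Rdot}) for $\dot R_\theta$ to obtain
\begin{equation*}
    \dot V_\theta = \tfrac{\theta}{2}C V_\theta - V_\theta \dot S_\theta S_\theta^{-1}.
\end{equation*}
Taking the adjoint (using $C^\dagger = C$, which is preserved by (\ref{CV}) since $\Psi_\theta^\dagger = \Psi_\theta$, together with $\dot S_\theta^\dagger = \dot S_\theta$) and combining, I would show that $U_\theta := V_\theta^\dagger V_\theta$ satisfies the closed operator ODE
\begin{equation*}
    \dot U_\theta
    =
    \theta U_\theta \Psi_\theta U_\theta
    - S_\theta^{-1}\dot S_\theta\, U_\theta
    - U_\theta\, \dot S_\theta S_\theta^{-1},
\end{equation*}
where (\ref{CV}) is used to turn $V_\theta^\dagger C V_\theta$ into $U_\theta \Psi_\theta U_\theta$.

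Then I would verify that $U_\theta \equiv \cI_{\fH}$ is a solution by checking that $\theta\Psi_\theta - S_\theta^{-1}\dot S_\theta - \dot S_\theta S_\theta^{-1} = 0$. This identity is obtained by pre- and post-multiplying the relation (\ref{ephidot}), rewritten as $\dot S_\theta S_\theta + S_\theta \dot S_\theta = \theta S_\theta \Psi_\theta S_\theta$, by $S_\theta^{-1}$. The initial condition $U_\theta(0) = \cI_{\fH}$ follows from $R_\theta(0) = S_\theta(0) = \cI_{\fH}$, and uniqueness of the solution of the above ODE then yields $U_\theta(t) = \cI_{\fH}$ for all $t\>0$. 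Invertibility of $V_\theta(t) = R_\theta(t) S_\theta(t)^{-1}$, which follows from that of $R_\theta$ (whose inverse satisfies $(R_\theta^{-1})^{^\centerdot} = -\tfrac{\theta}{2} R_\theta^{-1} C$, with $R_\theta^{-1}(0) = \cI_{\fH}$) and from positive definiteness of $S_\theta$, upgrades the isometry $V_\theta^\dagger V_\theta = \cI_{\fH}$ to the full unitarity $V_\theta V_\theta^\dagger = \cI_{\fH}$. Finally, $R_\theta^\dagger R_\theta = S_\theta V_\theta^\dagger V_\theta S_\theta = S_\theta^2 = \re^{\theta\varphi(t)}$, whence $E_\theta(t) = \Xi_\theta(t)$.

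The main obstacle is that $\dot S_\theta$ does not admit a simple closed form, since $\varphi(t)$ and $\psi(t) = \dot\varphi(t)$ do not commute in general. The argument circumvents this by never using $\dot S_\theta$ in isolation, only in the symmetric combination $S_\theta^{-1}\dot S_\theta + \dot S_\theta S_\theta^{-1}$, for which (\ref{ephidot}) provides the closed form $\theta\Psi_\theta$. The particular dressing $C = V_\theta\Psi_\theta V_\theta^\dagger$ in (\ref{CV}) is precisely what is needed so that at $U_\theta = \cI_{\fH}$ the nonlinear term $\theta U_\theta \Psi_\theta U_\theta$ cancels exactly against the $S_\theta^{-1}\dot S_\theta$ and $\dot S_\theta S_\theta^{-1}$ contributions, leaving $\dot U_\theta = 0$.
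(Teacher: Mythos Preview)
Your proof is correct and follows essentially the same route as the paper: both reduce the claim to showing that $U_\theta := V_\theta^\dagger V_\theta$ stays equal to $\cI_\fH$, derive an ODE for $U_\theta$ (the paper differentiates $U_\theta = S_\theta^{-1}R_\theta^\dagger R_\theta S_\theta^{-1}$ directly, you differentiate $V_\theta$ first, but the resulting equation is the same), and use (\ref{ephidot}) to identify $S_\theta^{-1}\dot S_\theta + \dot S_\theta S_\theta^{-1} = \theta\Psi_\theta$, so that $U_\theta\equiv\cI_\fH$ is a solution matching the initial condition. Your explicit remark that invertibility of $R_\theta$ and $S_\theta$ upgrades the isometry $V_\theta^\dagger V_\theta=\cI_\fH$ to full unitarity is a nice addition that the paper leaves implicit.
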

\begin{proof}
A sufficient condition for $E_{\theta}(t)=\Xi_\theta(t)$ is provided by
\begin{equation}
\label{RRexp}
    R_\theta(t)^\dagger R_\theta(t)
    =
    \re^{\theta \varphi(t)}
\end{equation}
(indeed, equal quantum variables have the same mean value).
In view of the exponential structure of the operators $R_\theta(t)$ and $S_\theta(t)$ in (\ref{Rt}) and (\ref{S}), the relation (\ref{RRexp}) holds if and only if the operator
$V_\theta(t)$ in (\ref{V}) is unitary, which is equivalent to
\begin{align}
\nonumber
    U_\theta(t)
    & :=
    V_\theta(t)^\dagger V_\theta(t)\\
\label{UV}
    & =
    S_{\theta}(t)^{-1} R_\theta(t)^\dagger R_\theta(t)S_{\theta}(t)^{-1}
\end{align}
being the identity operator. The unitarity holds trivially for the initial value $V_\theta(0) = \cI_\fH$ since $R_\theta(0) = S_\theta(0)=\cI_\fH$. The process $C$ can be organized so as to propagate this property over time. To this end, the evolution of $U_\theta(t)$ for $t\>0$ is obtained by differentiating  both sides of (\ref{UV}) with respect to time and using (\ref{RRdot}) together with the identity $(S^{-1})^{^\centerdot} = -S^{-1}\dot{S}S^{-1}$:
\begin{align}
\nonumber
    \dot{U}
    = &
    S^{-1} (R^\dagger R)^{^\centerdot} S^{-1}    \\
\nonumber
    & + (S^{-1})^{^\centerdot} R^\dagger R S^{-1}
    +
    S^{-1} R^\dagger R (S^{-1})^{^\centerdot} \\
\nonumber
    = &
    \theta S^{-1} R^\dagger C R S^{-1}\\
\nonumber
    & -S^{-1} \dot{S}S^{-1}R^\dagger R S^{-1}
     -S^{-1} R^\dagger R S^{-1} \dot{S}S^{-1}\\
\label{Udot}
    = &
    \theta V^\dagger C V
    -S^{-1} \dot{S} U
    -U \dot{S}S^{-1}.
\end{align}
Here, the subscript $\theta$ is omitted for brevity, and  use is also made of  (\ref{V}) (along with self-adjointness of the operator $S_\theta(t)$ in (\ref{S})). The unitarity of $V$ is preserved in time if the identity operator $U=\cI_{\fH}$ satisfies (\ref{Udot}), which takes the form
\begin{equation}
\label{Idot}
    \theta V^\dagger C V
    -
    S^{-1} \dot{S}
    -
    \dot{S}S^{-1} = 0.
\end{equation}
By combining the identity $(S^2)^{^\centerdot}= \dot{S}S + S\dot{S}$ with (\ref{ephidot}), it follows that
\begin{equation}
\label{Sdot}
    S^{-1} \dot{S}
    +
    \dot{S}S^{-1}
    =
    S^{-1}
    (S^2)^{^\centerdot}
    S^{-1}
    =
    \theta \Psi.
\end{equation}
Substitution of (\ref{Sdot}) into (\ref{Idot}) transforms the latter equation to
\begin{equation}
\label{Idot1}
    \theta (V^\dagger C V-\Psi) = 0
\end{equation}
Therefore, if $C$ is given by (\ref{CV}), then (\ref{Idot1}) holds due to $V$ in (\ref{V}) being unitary, whereby this property is preserved in time, and so also is (\ref{RRexp}).
\end{proof}

The operator $C(t)$, specified by Theorem~\ref{th:ort},   is unitarily equivalent to $\Psi_\theta(t)$.  Furthermore, by substituting (\ref{CV}) and (\ref{V}) into (\ref{Rdot}), it follows that the corresponding process $R_\theta$ is governed by
\begin{align}
\nonumber
    \dot{R}_{\theta}(t)
    & =
    \frac{\theta }{2}
    \overbrace{R_\theta(t)S_{\theta}(t)^{-1} \Psi_\theta(t) S_{\theta}(t)^{-1} R_\theta(t)^\dagger}^{C(t)} R_{\theta}(t)\\
\label{Rdot1}
    & = \frac{\theta }{2}
    R_\theta(t)S_{\theta}(t)^{-1} \Psi_\theta(t) S_{\theta}(t),
\end{align}
which is a linear ODE despite the quadratic dependence of $C$ on $R_\theta$. This reduction in (\ref{Rdot1}) holds due to the relation (\ref{RRexp}) (in the framework of Theorem~\ref{th:ort}) and the square root property $\re^{\theta \varphi(t)} = S_\theta(t)^2$.

\section{ENTROPY THEORETIC UNCERTAINTIES IN QUANTUM STATES}
\label{sec:worst}

For any time horizon $t>0$, let $\rho_t$ denote the system-field quantum state over the time interval $[0,t]$. In accordance with (\ref{rho}), the \emph{nominal state} is organised as
\begin{equation}
\label{rho0}
    \wh{\rho}_t
    :=
    \rho_0\ox \ups_t,
\end{equation}
where $\ups_t$ is the vacuum field state on the Fock subspace $\fF_t$ associated with $[0,t]$. In general, the actual state $\rho_t$ is not known precisely and may differ from $\wh{\rho}_t$, with both acting on the subspace
\begin{equation}
\label{fHt}
    \fH_t
    :=
    \fH_0 \ox \fF_t
\end{equation}
of the Hilbert space $\fH$ in (\ref{fH}).  This deviation from the nominal quantum state can be described by the quantum relative entropy \cite{OW_2010}
\begin{align}
\nonumber
  \bD(\rho_t \| \wh{\rho}_t)
  & :=
  \Tr
  (\rho_t (\ln \rho_t - \ln \wh{\rho}_t))  \\
\label{bD}
    & =
  - \bH(\rho_t)
  -\bE\ln \wh{\rho}_t,
\end{align}
where $\bH(\rho_t):= -\Tr(\rho_t \ln \rho_t) = -\bE \ln \rho_t$ is the von Neumann entropy \cite{NC_2000} of $\rho_t$, and the expectation is over the actual state; cf. \cite[Eq. (7)]{YB_2009}. It is assumed that the supports of the density operators satisfy the inclusion $\supp \rho_t\subset \supp \wh{\rho}_t$ (as subspaces of $\fH_t$ in (\ref{fHt})). Similarly to its classical counterpart \cite{CT_2006}, the quantity (\ref{bD}) is always nonnegative and vanishes only if $\rho_t = \wh{\rho}_t$.  Since it is the actual density operator (rather than its nominal model) that determines physically meaningful statistical properties of quantum variables, the discrepancy between $\rho_t$ and $\wh{\rho}_t$ can be interpreted as a quantum statistical uncertainty. The corresponding class of uncertain system-field quantum states $\rho_t$ can be represented as
\begin{equation}
\label{class}
    \fR_{t,\eps }
    :=
  \big\{
    \rho_t:\
    \bD(\rho_t \| \wh{\rho}_t) \< \eps t
  \big\},
\end{equation}
where $\eps$ is a given nonnegative parameter which bounds the quantum relative entropy growth rate in (\ref{bD}). In particular, the case $\eps = 0$ corresponds to the absence of uncertainty when the set in (\ref{class})  is a singleton consisting of the nominal state: $\fR_{t,0} = \{\wh{\rho}_t\}$. At the other extreme, for large values of $\eps$, some elements $\rho_t \in \fR_{t,\eps}$ of the uncertainty class can lead to substantially higher values of the cost functionals than those predicted by the nominal model. In application to the quadratic functional of the system variables in (\ref{phi}), the following theorem relates the worst-case value of the corresponding quadratic cost to the QEF
\begin{equation}
\label{QEFnom}
    \wh{\Xi}_\theta(t)
    :=
    \wh{\bE} \re^{\theta \varphi(t)}
    =
    \Tr(\wh{\rho}_t \re^{\theta \varphi(t)}).
\end{equation}
Note that, despite its similarity to (\ref{QEF}), the definition (\ref{QEFnom}) employs the expectation $\wh{\bE}(\cdot)$ over the nominal system-field state $\wh{\rho}_t$ in (\ref{rho0}) which can now be different from the actual state $\rho_t$.

\begin{thm}
\label{th:worst}
Suppose the actual system-field state $\rho_t$ (over the time interval $[0,t]$) belongs to the uncertainty class $\fR_{t,\eps}$ in (\ref{class}) for all $t\>0$. Then, for any given $\eps\>0$,  the growth rate for the worst-case value of the quadratic cost $\varphi(t)$ in (\ref{phi}) admits an upper bound
\begin{equation}
\label{supEphi}
  \limsup_{t\to+\infty}
  \Big(
  \frac{1}{t}
  \sup_{\rho_t \in \fR_{t,\eps}}
  \bE \varphi(t)
  \Big)
  \<
  \inf_{\theta >0}
    \frac{\eps + \gamma(\theta)}{\theta}.
\end{equation}
Here,
\begin{equation}
\label{gamma}
  \gamma(\theta)
  :=
  \limsup_{t\to +\infty}
  \Big(
  \frac{1}{t}
  \ln \wh{\Xi}_\theta(t)
  \Big)
\end{equation}
quantifies the upper growth rate for the nominal QEF in (\ref{QEFnom}).
\hfill$\square$
\end{thm}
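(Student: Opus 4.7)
The plan is to reduce the bound to the quantum Gibbs variational principle (a noncommutative Legendre--Fenchel duality between relative entropy and log-moment generating functions) applied to the self-adjoint operator $\theta\varphi(t)$. Concretely, for any density operators $\rho_t,\wh{\rho}_t$ on $\fH_t$ with $\supp\rho_t \subset \supp\wh{\rho}_t$ and any bounded self-adjoint $Y$ on $\fH_t$, the inequality
\begin{equation*}
    \Tr(\rho_t Y) \< \bD(\rho_t\|\wh{\rho}_t) + \ln \Tr(\wh{\rho}_t \re^{Y})
\end{equation*}
holds, with equality attained by the Gibbs-type state $\rho_t \propto \wh{\rho}_t^{1/2}\re^{Y}\wh{\rho}_t^{1/2}$ (after appropriate symmetrization). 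First, I would record this quantum duality, together with a remark that its derivation proceeds via Klein's inequality applied to $\rho_t$ and the normalized exponentiated state, exactly as in the classical Donsker--Varadhan formula; references like \cite{OW_2010,YB_2009} can be cited.

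Next, specialize to $Y = \theta\varphi(t)$ with arbitrary $\theta>0$. Since $\varphi(t)$ is self-adjoint and positive semi-definite by (\ref{psizeta}), the operator $\re^{\theta\varphi(t)}$ is well-defined and the trace $\Tr(\wh{\rho}_t\re^{\theta\varphi(t)}) = \wh{\Xi}_\theta(t)$ by (\ref{QEFnom}). Dividing the duality inequality by $\theta$ gives
\begin{equation*}
    \bE \varphi(t)
    =
    \Tr(\rho_t \varphi(t))
    \<
    \frac{1}{\theta}\bigl(\bD(\rho_t\|\wh{\rho}_t) + \ln\wh{\Xi}_\theta(t)\bigr).
\end{equation*}
Taking the supremum over $\rho_t\in\fR_{t,\eps}$ and using the constraint $\bD(\rho_t\|\wh{\rho}_t) \< \eps t$ from (\ref{class}) yields
\begin{equation*}
    \sup_{\rho_t\in\fR_{t,\eps}} \bE\varphi(t)
    \<
    \frac{1}{\theta}\bigl(\eps t + \ln\wh{\Xi}_\theta(t)\bigr).
\end{equation*}

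Dividing both sides by $t$ and taking $\limsup_{t\to+\infty}$, the definition (\ref{gamma}) of $\gamma(\theta)$ yields
\begin{equation*}
    \limsup_{t\to+\infty}
    \Big(
        \frac{1}{t}\sup_{\rho_t\in\fR_{t,\eps}}\bE\varphi(t)
    \Big)
    \<
    \frac{\eps + \gamma(\theta)}{\theta}.
\end{equation*}
Since $\theta>0$ was arbitrary, taking the infimum over $\theta$ gives (\ref{supEphi}). The main obstacle is a careful handling of the quantum variational inequality: one needs the supports to be compatible (guaranteed by $\supp\rho_t\subset\supp\wh{\rho}_t$, which should be imposed or inherited from the uncertainty class), and one must ensure that $\re^{\theta\varphi(t)}$ is trace-class against $\wh{\rho}_t$ so that both sides are finite; otherwise the bound (\ref{supEphi}) becomes vacuous at that $\theta$. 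A minor secondary point is the interchange of $\limsup$ with the $\sup$ and $\inf$, which is handled simply by monotonicity of $\limsup$ under pointwise inequalities and by taking the infimum after passing to the limit.
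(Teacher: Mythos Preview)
Your overall strategy is the same as the paper's: invoke the duality inequality $\theta\,\bE\varphi(t)\<\bD(\rho_t\|\wh{\rho}_t)+\ln\wh{\Xi}_\theta(t)$, take the supremum over $\fR_{t,\eps}$, divide by $t$, pass to the $\limsup$, and then minimize over $\theta>0$. The sequence of steps after the key inequality is identical to the paper's proof.

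The one point that needs correction is your justification of the duality inequality itself. In the noncommutative setting the bound $\Tr(\rho_t Y)\<\bD(\rho_t\|\wh{\rho}_t)+\ln\Tr(\wh{\rho}_t\,\re^{Y})$ does \emph{not} follow from Klein's inequality alone, and equality is \emph{not} in general attained by the state $\wh{\rho}_t^{1/2}\re^{Y}\wh{\rho}_t^{1/2}/Z$ you propose. The exact Gibbs variational formula gives $\Tr(\rho_t Y)-\bD(\rho_t\|\wh{\rho}_t)\<\ln\Tr\re^{\ln\wh{\rho}_t+Y}$, and one then needs the Golden--Thompson inequality $\Tr\re^{\ln\wh{\rho}_t+Y}\<\Tr(\wh{\rho}_t\,\re^{Y})$ to reach the form you use; the paper obtains this step by citing \cite[Lemma~2.1 and Eq.~(9)]{YB_2009}, which are based precisely on Golden--Thompson \cite{G_1965,T_1965}. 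The paper in fact remarks explicitly that, unlike the classical minimax LQG duality, equality in the quantum inequality is not necessarily achievable. Since you only use the inequality, your argument survives once you replace ``Klein's inequality'' by the Golden--Thompson route and drop the equality claim.
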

\begin{proof}
Application of \cite[Lemma 2.1]{YB_2009} and its corollary \cite[Eq. (9)]{YB_2009} (which are based on the Golden-Thompson inequality $\Tr(\re^{\xi+\eta}) \< \Tr (\re^\xi \re^\eta)$ for the exponentials of self-adjoint operators \cite{G_1965,OP_1993,T_1965}) to the self-adjoint quantum variable $\theta \varphi(t)$ yields
\begin{equation}
\label{GT}
  \theta \bE \varphi(t)
  \<
  \bD(
    \rho_t
    \|
    \wh{\rho}_t
  )
  +
  \ln \wh{\Xi}_\theta(t),
\end{equation}
where $\wh{\Xi}_\theta(t)$ is given by 
(\ref{QEFnom}). In contrast to the duality relation, which is used in classical minimax LQG control with a relative entropy description of statistical uncertainty \cite{DJP_2000,P_2006,PJD_2000,PUS_2000}, the equality in the quantum counterpart (\ref{GT}) is not necessarily achievable.
Although the expectation $\bE(\cdot)$ and the quantum relative entropy $\bD(\rho_t \| \wh{\rho}_t)$ in (\ref{GT}) depend on $\rho_t$, the actual density operator does not enter (\ref{QEFnom}). Therefore, maximization of both sides of (\ref{GT}) over the uncertainty class in (\ref{class}) leads to
\begin{align}
\nonumber
    \theta
  \sup_{\rho_t \in \fR_{t,\eps}}
  \bE \varphi(t)
  & \<
  \ln \wh{\Xi}_\theta(t)
  +
  \sup_{\rho_t \in \fR_{t,\eps}} \bD(\rho_t \| \wh{\rho}_t)\\
\label{GTsup}
  & \<
  \ln \wh{\Xi}_\theta(t)
  +
  \eps t,
\end{align}
which holds for all $\theta, t\>0$. By dividing both sides of (\ref{GTsup}) by $t>0$ and taking the infinite-horizon upper limit,  it follows that
\begin{align}
\nonumber
    \theta
  \limsup_{t\to +\infty}
  \Big(
    \frac{1}{t}
  \sup_{\rho_t \in \fR_{t,\eps}}
  \bE \varphi(t)
  \Big)
  & \<
  \eps
  +
  \limsup_{t\to +\infty}
  \Big(
    \frac{1}{t}
    \ln \wh{\Xi}_\theta(t)
  \Big)\\
\label{GTlimsup}
  & =
  \eps + \gamma(\theta),
\end{align}
where use is made of (\ref{gamma}). Now, division of both sides of (\ref{GTlimsup}) by $\theta>0$ yields the inequality
\begin{equation}
\label{GTlimsupdiv}
  \limsup_{t\to +\infty}
  \Big(
    \frac{1}{t}
  \sup_{\rho_t \in \fR_{t,\eps}}
  \bE \varphi(t)
  \Big)
   \<
  \frac{1}{\theta}(\eps + \gamma(\theta)),
\end{equation}
whose left-hand side is independent of $\theta$. Hence, minimization of the right-hand side of (\ref{GTlimsupdiv}) over $\theta>0$ establishes (\ref{supEphi}).
\end{proof}

For any given $\eps\>0$, the quantity $\frac{1}{\theta}(\eps + \gamma(\theta))$ under minimization  in (\ref{supEphi})  is a convex function of $\theta>0$, which is ensured by the convexity of each of the functions $\frac{\eps}{\theta}$ and $\frac{\gamma(\theta)}{\theta}$.
Also, since the right-hand side of (\ref{supEphi}) is increasing with respect to $\gamma(\theta)$, this inequality remains valid if $\gamma(\theta)$ is replaced with its upper bound. Such estimates for the asymptotic growth rate $\gamma(\theta)$ of the QEF are provided, for example, by \cite[Theorem 5]{VPJ_2017b} and will be used in the next section.

\section{AN UPPER BOUND FOR THE WORST-CASE QUADRATIC COST}
\label{sec:upper}

We will now consider the OQHO (\ref{Theta})--(\ref{AB}) with a Hurwitz matrix $A$. In the framework of the nominal model of Sections~\ref{sec:links} and \ref{sec:worst}, when the external fields are in the vacuum state, the stable OQHO has a unique invariant quantum state. This state is Gaussian \cite{KRP_2010} with zero mean and quantum covariance matrix $P+i\Theta$, where $P$ is the controllability Gramian of the pair $(A,B)$ 
satisfying the  algebraic Lyapunov equation
\begin{equation}
\label{P}
    AP+PA^\rT + BB^\rT = 0.
\end{equation}
If the OQHO is initialized at the invariant state $\rho_0$ in (\ref{rho0}), then the two-point quantum covariance matrix of the system variables is given by
\begin{equation}
\label{EXX}
    \wh{\bE}(X(s)X(t)^{\rT})
    =
    \Sigma(s-t) + i\Lambda(s-t),
    \qquad
    s,t\>0,
\end{equation}
where $\wh{\bE}$ is the expectation over the nominal system-field state as before, and the function
\begin{equation}
\label{Sigma}
    \Sigma(\tau)
    =
    \left\{
    \begin{matrix}
    \re^{\tau A}P & {\rm if}\ & \tau\> 0\\
    P \re^{-\tau A^{\rT}} & {\rm if}\ & \tau< 0\\
    \end{matrix}
    \right.
\end{equation}
is related to the matrix $P$ from (\ref{P}) in the same fashion as the function $\Lambda$ in (\ref{Lambda}) is related to the CCR matrix $\Theta$ from (\ref{Theta}). Since the matrix $A$ is Hurwitz, the appropriate matrix exponentials in (\ref{Lambda}) and (\ref{Sigma}) make  $\Sigma(\tau)+i\Lambda(\tau)$ in (\ref{EXX}) decay exponentially fast as $\tau\to \infty$. This property is inherited by the corresponding two-point covariance function $Z$ of the auxiliary quantum variables $\zeta_1, \ldots, \zeta_n$ in (\ref{psizeta}) and (\ref{zeta}):
\begin{equation*}
\label{EZZ}
    \wh{\bE}(\zeta(s)\zeta(t)^{\rT})
    =
    Z(s-t),
    \qquad
    s,t\>0,
\end{equation*}
where
\begin{equation}
\label{Z}
    Z(\tau)
    :=
    \sqrt{\Pi}(\Sigma(\tau) + i\Lambda(\tau))\sqrt{\Pi}.
\end{equation}
The exponential decay of $Z$ at infinity plays an important role in the following theorem, which specifies the robustness estimate of Theorem~\ref{th:worst} and is based substantially on \cite[Theorems 5 and 6]{VPJ_2017b}.

\begin{thm}
\label{th:devup}
Suppose the OQHO (\ref{dX}) has a Hurwitz matrix $A$ in (\ref{AB}), and the nominal system-field state is described by (\ref{rho0}), where $\rho_0$ is the invariant Gaussian state with zero mean and the real part $P$ of the quantum covariance matrix found from (\ref{P}).
Then the growth rate for the worst-case value of the quadratic cost in Theorem~\ref{th:worst} admits an upper bound
\begin{equation}
\label{supEphi1}
  \limsup_{t\to+\infty}
  \Big(
  \frac{1}{t}
  \sup_{\rho_t \in \fR_{t,\eps}}
  \bE \varphi(t)
  \Big)
  \<
    n\alpha
    \Big(1+\sigma + \sqrt{\sigma(2+\sigma)}\Big).
\end{equation}
%
Here,
\begin{align}
\label{sig}
    \sigma
    & :=
    \frac{2\eps}{n\mu},\\
\label{alpha}
    \alpha
    & :=
    \|\sqrt{\Pi} \sqrt{\Gamma}\|
    \|\Gamma^{-1/2}(P+i\Theta)\sqrt{\Pi}\|,
\end{align}
where $\|\cdot\|$ denotes the operator norm of a matrix, and $(\mu,\Gamma)$ is any pair of a scalar $\mu>0$ and a real positive definite symmetric matrix $\Gamma$ of order $n$  satisfying the algebraic Lyapunov inequality
\begin{equation}
\label{ALI}
    A\Gamma + \Gamma A^{\rT}\preccurlyeq - 2\mu \Gamma.
\end{equation}
\hfill$\square$
\end{thm}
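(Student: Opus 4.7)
The plan is to reduce the claim to a one-dimensional convex minimization by combining Theorem~\ref{th:worst} with an explicit upper bound on the QEF growth rate $\gamma(\theta)$ valid for stable OQHOs. Since the right-hand side of (\ref{supEphi}) is monotonically increasing in $\gamma$, it suffices to produce a tractable upper estimate $\bar\gamma(\theta)\>\gamma(\theta)$, substitute it into $\inf_{\theta>0}\frac{\eps+\bar\gamma(\theta)}{\theta}$, and then optimize explicitly.

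The first main step is to invoke \cite[Theorems~5 and 6]{VPJ_2017b}, which, under the present hypotheses (Hurwitz $A$ in (\ref{AB}), invariant Gaussian nominal state $\rho_0$ on $\fH_0$ with covariance matrix $P+i\Theta$ determined by (\ref{P}) and (\ref{Theta}), and any pair $(\mu,\Gamma)$ satisfying the Lyapunov inequality (\ref{ALI})) furnish an upper bound on $\gamma(\theta)$ expressed through $n$, the norm product $\alpha$ in (\ref{alpha}), and $\mu$. The mechanism is that (\ref{ALI}) forces the two-point covariance function $Z$ in (\ref{Z}) of the weighted variables $\zeta$ in (\ref{zeta}) to decay exponentially at rate $\mu$, and this decay controls the spectrum of the Fredholm-type integral operator on $[0,t]$ whose spectral data produce the QEF growth rate.

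The second main step is the explicit minimization. After the change of variable $\sigma = 2\eps/(n\mu)$ from (\ref{sig}), the function
\begin{equation*}
    F(\theta) \;:=\; \frac{\eps+\bar\gamma(\theta)}{\theta}
\end{equation*}
is convex on its admissibility interval $\theta\in (0,\theta_{\max})$ (both $\eps/\theta$ and $\bar\gamma(\theta)/\theta$ are convex, the latter as a ratio of a convex nondecreasing function vanishing at the origin to $\theta$). Hence $F$ attains its minimum at the unique interior critical point $\theta_\ast$ characterized by $F'(\theta_\ast)=0$. That equation reduces to a quadratic in a suitable rescaled variable, with discriminant proportional to $\sigma(2+\sigma)$; substituting $\theta_\ast$ back into $F$ and simplifying via the identity $(1+\sigma-\sqrt{\sigma(2+\sigma)})(1+\sigma+\sqrt{\sigma(2+\sigma)})=1$ produces the closed-form right-hand side of (\ref{supEphi1}).

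The main obstacle is the algebraic bookkeeping in the second step: one must extract the precise form of the $\gamma$-estimate from \cite[Theorems~5 and 6]{VPJ_2017b} so that, after normalization in terms of $\alpha$ and $\mu$, the minimization yields exactly the combination $1+\sigma+\sqrt{\sigma(2+\sigma)}$ rather than an equivalent but differently packaged hyperbolic expression (note $1+\sigma+\sqrt{\sigma(2+\sigma)}=\re^{\mathrm{arccosh}(1+\sigma)}$). A supplementary check is that $\theta_\ast$ lies strictly inside the admissibility window $(0,\theta_{\max})$ for every $\eps\>0$, and that the limiting case $\eps=0$ (i.e. $\sigma=0$) recovers the nominal rate bound $n\alpha$, which provides a useful consistency test and agrees with the zero-uncertainty limit $\fR_{t,0}=\{\wh\rho_t\}$ of the class (\ref{class}).
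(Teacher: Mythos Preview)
Your plan is correct and matches the paper's own proof: it too invokes \cite[Theorems~5 and~6]{VPJ_2017b} to obtain the explicit bound $\gamma(\theta)\<\frac{n}{2}\big(\mu-\sqrt{\mu^2-4\theta\alpha\mu}\big)$ on the interval $0<\theta<\mu/(4\alpha)$ from the exponential decay $\|Z(\tau)\|\<\alpha\re^{-\mu|\tau|}$ furnished by (\ref{ALI}), substitutes this into (\ref{supEphi}), and minimizes over $\theta$ to reach (\ref{supEphi1}), with the optimal $\theta$ given by (\ref{min}). One small caution: your justification that $\bar\gamma(\theta)/\theta$ is convex ``as a ratio of a convex nondecreasing function vanishing at the origin to $\theta$'' is not a valid general principle (easy counterexamples exist), though the convexity does hold here by direct computation and is in any case unnecessary since the infimum can be found explicitly.
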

\begin{proof}
As shown in the proof of \cite[Theorem 6]{VPJ_2017b},  any pair $(\mu,\Gamma)$ satisfying (\ref{ALI}) (which exists since $A$ is Hurwitz) and the corresponding parameter $\alpha$ in (\ref{alpha}) allow the operator norm of the quantum covariance matrix (\ref{Z}) to be bounded as
\begin{equation}
\label{Nup1}
    \|Z(\tau)\|
    \<
    \alpha \re^{-\mu |\tau|},
    \qquad
    \tau \in \mR.
\end{equation}
By applying \cite[Theorem 5]{VPJ_2017b}, it follows  that
the upper growth rate $\gamma(\theta)$ in (\ref{gamma})  for the nominal QEF $\wh{\Xi}_\theta$ in (\ref{QEFnom}) admits the bound
\begin{align}
\nonumber
    \gamma(\theta)
    & \<
    -\frac{n}{4\pi}
    \int_{-\infty}^{+\infty}
    \ln
    (1 - 2\theta F(\lambda))
    \rd
    \lambda\\
\label{upQEF}
    & =
            \frac{n}{2}
    \big(
        \mu - \sqrt{\mu^2 - 4\theta \alpha \mu}
    \big),
\end{align}
for any $\theta$ satisfying
\begin{equation}
\label{thetarange1}
    0 \< \theta < \frac{\mu}{4\alpha},
\end{equation}
where use is made of the Fourier transform of the right-hand side of (\ref{Nup1}):
$$    F(\lambda)
    :=
    \alpha
    \int_{-\infty}^{+\infty}
    \re^{-\mu |\tau|-i\lambda \tau}
    \rd \tau
    =
    \frac{2\alpha \mu}{\lambda^2 + \mu^2}.
$$
A combination of (\ref{supEphi}) of Theorem~\ref{th:worst} with (\ref{upQEF}) and (\ref{thetarange1}) leads to
\begin{align}
\nonumber
  \!\!\!\limsup_{t\to+\infty}
  \Big(
  \frac{1}{t}
  \sup_{\rho_t \in \fR_{t,\eps}}
  \bE \varphi(t)
  \Big)
   \<&\!\!
  \inf_{0 < \theta < \frac{\mu}{4\alpha}}\!\!\!
    \frac{\eps +             \frac{n}{2}
    \big(
        \mu - \sqrt{\mu^2 - 4\theta \alpha \mu}
    \big)}{\theta}
\\
\label{supEphi2}
  =&
  2n\alpha
  \inf_{0 < u< 1}
    \frac{
    \sigma
     +
        1 - \sqrt{1 - u}
    }{u},\!
\end{align}
where $\sigma$ is the dimensionless parameter in (\ref{sig}). The minimization on the right-hand side of (\ref{supEphi2}) leads to (\ref{supEphi1}), with the minimum being achieved at
\begin{equation}
\label{min}
    \theta
    =
    \frac{\mu}{2\alpha}
    \Big(
        1+\sigma-\sqrt{\sigma(2+\sigma)}
    \Big)
    \sqrt{\sigma(2+\sigma)}
\end{equation}
(which belongs to the interval in (\ref{thetarange1}) for any $\sigma\>0$).
\end{proof}

The inequality (\ref{ALI}) is guaranteed to have a positive definite solution $\Gamma$ for any $\mu<-\max \Re \fS$, where $\fS$ denotes the spectrum of the Hurwitz matrix $A$. A particular choice of such a matrix $\Gamma$ (which can employ the eigenbasis of $A$ in the case of diagonalizability) influences the parameter $\alpha$ in (\ref{alpha}) which enters (\ref{supEphi1}) together with $\mu$. The relation (\ref{min}) suggests an ``optimal'' value for the risk-sensitivity parameter $\theta$ in regard to the guaranteed robust performance bounds (\ref{supEphi1}) for the quadratic cost. The right-hand side of (\ref{min}) is strictly increasing with respect to the quantum relative entropy rate threshold $\eps\>0$ which specifies the quantum statistical uncertainty class (\ref{class}). The corresponding value $\theta=\theta_\eps$ vanishes at $\eps=0$ (when there is no uncertainty) and approaches the limit $\frac{\mu}{4\alpha}$ in (\ref{thetarange1}) as $\eps\to +\infty$ (large uncertainties).

\section{CONCLUSION}
\label{sec:conc}

We have discussed the links between the original risk-sensitive performance criterion for quantum control systems and its recent quadratic-exponential version. The quadratic-exponential cost functional has a bearing on robustness with respect to a class of uncertain quantum states of the system and its environment whose deviation from a nominal state is described in terms of the 
quantum relative entropy. These relations are similar to the robustness properties of the minimax LQG control for classical systems. In obtaining these results, we have employed algebraic
and quantum probabilistic 
techniques using, in particular, operator square roots and density operator transformations.
The findings of the paper can be of use in providing a rational choice of the risk-sensitivity parameter in the context of the entropy theoretic quantification of statistical uncertainty in the system-environment state.


\vspace{-1mm}

%

\end{document}